\DeclareMathOperator*{\argmax}{arg\,max}
\newtheorem{theorem}{Theorem}
\newtheorem{corollary}{Corollary}
\newcommand{\Z}{\mathbb{Z}}
\newcommand{\X}{\mathbb{X}}
\newcommand{\F}{\mathbb{F}}
\newcommand{\R}{\mathbb{R}}
\newcommand{\C}{\mathbb{C}}
\newcommand{\Var}{\mathrm{Var}}
\newcommand{\Cov}{\mathrm{Cov}}
\newcommand{\bY}{\bm{Y}}
\newcommand{\bK}{\bm{K}}
\newcommand{\bmf}{\bm{f}}
\newcommand{\bR}{\bm{R}}
\newcommand{\cY}{\mathcal{Y}}
\newcommand{\loc}{x}
\begin{document}

\begin{center}{\LARGE Nonparametric Spectral Methdods for  \\

\vspace{6pt}

Multivariate Spatial and Spatial-Temporal Data }

\vspace{12pt}

{Joseph Guinness}
\vspace{12pt}

\textit{Cornell University, Department of Statistical Science}
\vspace{24pt}

\textbf{Abstract}

\end{center}

We propose computationally efficient methods for estimating stationary multivariate spatial and spatial-temporal spectra from incomplete gridded data. The methods are iterative and rely on successive imputation of data and updating of model estimates. Imputations are done according to a periodic model on an expanded domain. The periodicity of the imputations is a key feature that reduces edge effects in the periodogram and is facilitated by efficient circulant embedding techniques. In addition, we describe efficient methods for decomposing the estimated cross spectral density function into a linear model of coregionalization plus a residual process. The methods are applied to two storm datasets, one of which is from Hurricane Florence, which struck the souteastern United States in September 2018. The application demonstrates how fitted models from different datasets can be compared, and how the methods are computationally feasible on datasets with more than 200{,}000 total observations.

\section{Introduction}\label{introduction}

There are several exciting scientific campaigns to produce and observe multivariate data that vary over space and time. For example, large climate centers such as the National Center for Atmospheric Research (NCAR) produce high resolution simulations of the Earth system. These models include dozens of variables evolving in concert over space and time. The National Aeronautics and Space Administration (NASA) and the National Oceanic and Atmospheric Administration (NOAA) have deployed numerous satellites that collect observations of the Earth surface and atmosphere. NASA and NOAA recently launched a pair of satellites in its ongoing geostationary operational environmental satellite (GOES) program, the GOES-16 and GOES-17 spacecraft, that sit in geostationary orbit, continually monitoring light reflected by the Earth surface and atmosphere in 16 separate wavelength bands. In turn, the raw data are processed to produce dozens of physically relevant variables, such as atmospheric water vapor content and land surface temperature.

A statistical framework for analyzing these data should, at a minimum, include (1) sufficiently flexible statistical models capable of capturing complex multivariate dependencies, and (2) computationally efficient tools for estimating the models from data.  \cite{genton2015cross} provide a thorough review of existing modeling and estimation frameworks. \cite{kleiber2017coherence} conducted a theoretical analysis of a number of multivariate spatial models from the literature and concluded that many of them impose oversimplistic restrictions on the coherence between pairs of variables. For example, separable and kernel convolution multivariate spatial models \citep{majumdar2007multivariate} have constant coherence. In the linear model of coregionalization (LMC) \citep{banerjee2014hierarchical}, when component processes have spectral densities that decay with different rates, the coherence always converges to a non-zero constant as the frequency increases.

It is desirable to have models that do not impose such restrictions, for example, to have a model class that allows the coherence to decay to zero as frequency increases.
\cite{kleiber2017coherence} showed that the multivariate Mat\'ern models \citep{gneiting2010matern,apanasovich2012valid} do possess such flexiblity. However, likelihood-based estimation of parameters in multivariate Mat\'ern models is not computationally feasible for the massive datasets mentioned in our opening paragraph. Computational issues come from two sources. The first is that the data size prohibits formation and factoring of the covariance matrix, necessary operations for evaluating the likelihood function. There exist promising approximations that could in principle apply to the multivariate case, such as Vecchia's likelihood approximation \citep{vecchia1988estimation,datta2016hierarchical,guinness2018permutation,katzfuss2017general} and hierarchical matrix approximations \citep{saibaba2012application,ambikasaran2016fast,litvinenko2017likelihood}, but these approximations are untested for multivariate models, and it is not yet clear how to optimally implement them, and how their performance will compare to univariate cases. Such work would be a welcome development to the literature. However, a second and perhaps more serious computational problem, even for fast approximate methods, is that multivariate spatial models contain many more parameters, usually on the order of the square of the number of components. The large number of parameters poses a serious issue for optimization of the (approximate) likelihood function, which typically requires many iterations until convergence over a large parameter space. We demonstrate this problem with a simple example in Section \ref{simulationsection}

Nonparametric spectral methods offer the potential of addressing both the modeling and computational demands for multivariate spatial and spatial-temporal data. In this framework, a discrete Fourier transform (DFT) is applied to each multivariate component, and then the periodogram vectors are smoothed over frequency to generate a nonparametric estimate of the cross spectral density matrices. This approach is reasonably flexible from a modeling standpoint, since the only assumption is that the cross spectral density matrices vary smoothly with frequency, and it is computationally inexpensive, since we can evaluate the DFTs quickly with FFT algorithms. However, nonparametric spectral methods have limited applicability since they typically apply only to complete, gridded data on rectangular domains. Moreover, even if we have such data, edge effects can introduce severe biases in the spatial and especially the spatial-temporal cases \citep{guyon1982parameter}. Tapering \citep{dahlhaus1987edge} and differencing \citep{lim2008properties} have the potential to reduce edge effects, but differencing is not possible when there are missing values, and \cite{guinness2017spectral} showed that tapering can be ineffective under certain missingness scenarios and suboptimal compared to imputation-based methods.

This paper provides methods that addresses edge effects and extend the applicability of nonparametric spectral methods to incomplete gridded multivariate spatial-temporal data. This is achieved by leveraging the simple, yet powerful technique of periodic imputation introduced for the univariate case in \cite{guinness2017spectral}. We also introduce a method for decomposing the estimated cross spectral density function into an LMC cross spectral density function plus a residual cross spectral density function, which can be used as tool for exploring the spatial-temporal variation in the data. We apply the methods to compare the multivariate spatial-temporal covariances of data from an ordinary storm and from Hurricane Florence, demonstrating that these methods can computationally feasibly estimate multivariate spatial-temporal models with more than 200{,}000 obervations, and how the estimates can be interpreted to distinguish among models.

\section{Iterative Spectrum Estimator}

\subsection{Model and Notation}

Let $\loc \in \Z^d$ be a location on the $d$-dimensional integer lattice, and let $\bY(\loc) = (Y_1(\loc),\ldots,Y_p(\loc)^T) \in \R^p$ be a $p$-variate mean-zero random process on $\Z^d$. The process $\bY$ is stationary if $\loc_1 - \loc_2 = \loc_3 - \loc_4$ implies that $E( \bY(\loc_1)\bY(\loc_2)^T ) = E( \bY(\loc_3)\bY(\loc_4)^T )$, in which case we can define $\bK( h ) = E( \bY(\loc)\bY(\loc+h)^T )$ as the covariance between the process at pairs of locations separated by lag $h = (h_1,\ldots,h_d)$.
Cramer's Theorem \citep[cf.]{wackernagel1998multivariate} states that $\bK$ has the representation \begin{align}
\bK(h) = \int_{[0,1]^d} \bmf(\omega) \exp(2\pi i\omega \cdot h)d\omega,
\end{align}
where $i = \sqrt{-1}$, and $\omega \cdot h = \omega_1 h_1 + \cdots + \omega_d h_d$. The matrix-valued function $\bmf(\omega) \in \C^{p \times p}$ has $(j,k)$ entry $f_{jk}(\omega)$, and must be a Hermitian positive definite matrix in order that $\bK(h)$ is a real and positive definite function. We call $\bmf(\omega)$ the cross-spectral density (CSD) function.

Let $j_1,\ldots,j_n \in \{1,\ldots,p\}$ and $\loc_1,\ldots,\loc_n$ be a sequence of locations and $U = (Y_{j_1}(\loc_1),\ldots,Y_{j_n}(\loc_n))$ be the vector of observations. In this notation, $n$ is the total number of individual observations, and we may observe between $0$ and $p$ of the components at any location. Because the process is stationary, we can assume without loss of generality that each location $x_k$ falls ``northeast'' of the location $(1,\ldots,1)$ and ``southwest'' of some location $a = (a_1,\ldots,a_d)$. Formally we say that $x_k$ falls in the observation lattice  $\X_a$, where
\begin{align}
\X_a = \{ (r_1,\ldots,r_d) \, : \, r_k \in \{1,\ldots,a_k\} \mbox{ for every } k  \}.
\end{align}
The methods in this paper involve imputing the multivariate process onto a domain larger than the observation lattice $\X_a$. To this end, define $b = (b_1,\ldots,b_d)$ with $b_j > a_j$ for each $j$, and the corresponding embedding lattice $\X_b$. Let $(U,V)$ denote a complete vector of observations on $\X_b$, that is $(U,V)$ contains each of the $p$ elements of $\bY(\loc)$ at each location $\loc \in \X_b$, ordered with the observations $U$ first, then the missing values $V$ second. Figure \ref{latticefigure} provides a visual description of these definitions.

\begin{figure}
\centering
\includegraphics[width=\textwidth]{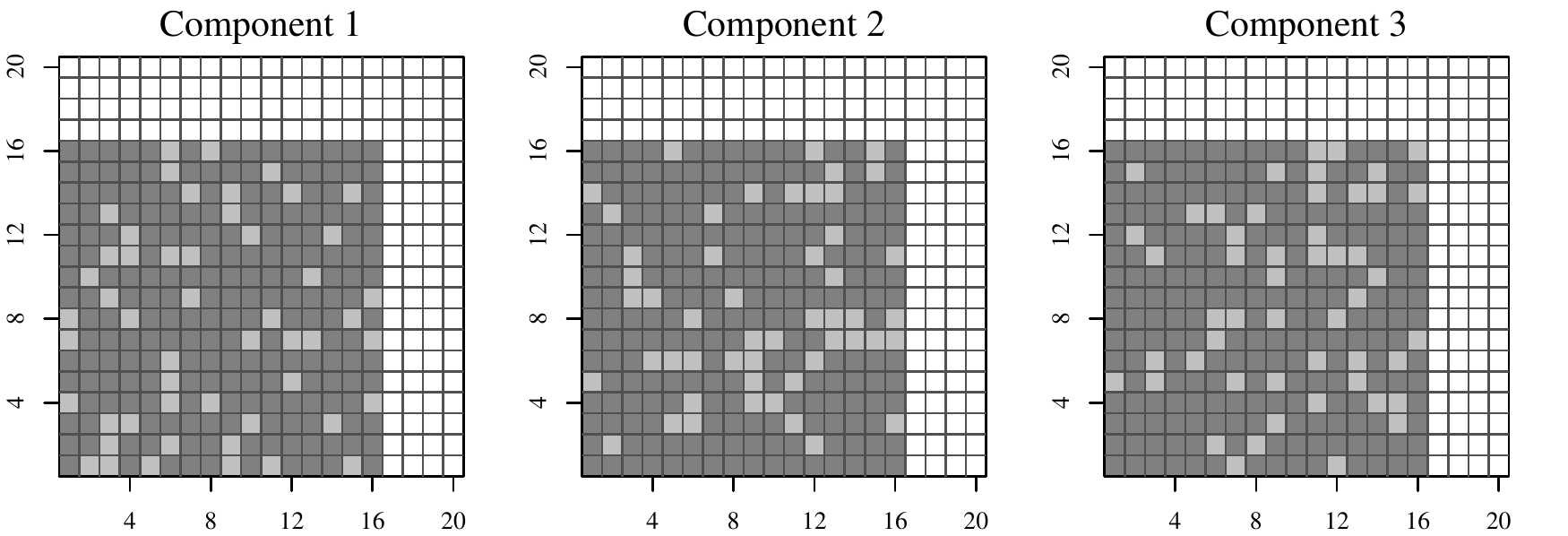}
\caption{ \label{latticefigure} For a $p=3$ component dataset, observation lattice $\X_a$ (light + dark gray) and embedding lattice $\X_b$ (gray + white), for $a = (16,16)$ and $b = (20,20)$. Dark gray represents locations for observations in $U$, light gray and white represent locations for observations in $V$. Note that locations of missing data on $\X_a$ need not be the same in each component. }
\end{figure}

In the next subsection, we describe methods for iteratively imputing the missing values $V$ on $\X_b$. However, we first describe how we obtain a CSD estimate given a complete set of observations $(U,V)$ on $\X_b$. We propose the following routine for obtaining a CSD estimate $\widehat{\bm{f}}[(U,V)]$:
\begin{align}
\label{step1}  \cY_j(\omega) &= \frac{1}{\sqrt{m}} \sum_{\loc \in \X_b} Y_j(\loc) \exp(-2\pi i \omega \cdot \loc) \\
\label{step2} \theta_j &= \argmax_\theta \sum_{\omega \in \F_b} \left[  - \log f_\theta(\omega) - \frac{ |\cY_j(\omega)|^2 }{f_\theta(\omega)} \right] \\
\label{step3} \widehat{f}_{jk}(\omega) &= \big( {f}_{\theta_j}(\omega) {f}_{\theta_k}(\omega) \big)^{1/2}
\sum_{\nu \in \F_b} \frac{ \cY_j(\nu) \cY_k(\nu)^* }{\big( {f}_{\theta_j}(\nu) {f}_{\theta_k}(\nu) \big)^{1/2} } \alpha(\omega - \nu).
\end{align}
The symbol $\F_b$ refers to the set of Fourier frequencies on a grid of size $b$, and $*$ is complex conjugate. Each computation written in terms of $\omega$ is computed for every $\omega \in \F_b$. If the equation is written in terms of $j$, it is computed for each $j = 1,\ldots,p$, and likewise, equations written in terms of $j$ and $k$ are computed also for each $k = 1,\ldots,p$. In \eqref{step1}, we compute the DFT of the complete dataset on $\X_b$. In \eqref{step2}, we maximize Whittle's loglikelihood approximation for some parametric spectral density $f_\theta$. In Sections \ref{simulationsection} and \ref{datasection} we use the quasi Mat\'ern spectral density proposed in \cite{guinness2017circulant}. In \eqref{step3}, we smooth the sample covariances of the normalized DFT entries with kernel $\alpha$, and multiply by the square roots of the estimated parametric spectral densities.

\cite{guinness2017spectral} suggested the parametric normalization in \eqref{step3} as a method for reducing smoothing bias, which allows for the use of wider smoother kernels, resulting in estimates with smaller variance. If instead $f_\theta$ is assumed to be constant, the routine for obtaining a CSD estimate reduces to a standard routine described, for example, in \cite{brockwelldavis} for multivariate time series. Nonparametric or semiparametric CSD estimation methods are desirable because the estimated CSD function $\widehat{\bm{f}}(\omega)$ is automatically positive definite at every frequency if the support of $\alpha$ includes at least $p$ Fourier frequencies in $\F_b$. However, when the dimension $d$ of the field is larger than 1, substantial bias is imparted by edge effects \citep{guyon1982parameter,lim2008properties}. In the next subsection, we describe iterative imputation methods for reducing edge effect bias.

\subsection{Iterative Periodic Imputation Methods}

\cite{lee2009nonparametric} proposed a method for estimating a univariate time series spectrum from incomplete data by iteratively imputing the missing data on the observation domain.
\cite{guinness2017spectral} showed that when $d > 1$, univariate spectral density estimates can be improved dramatically when data are iteratively imputed onto embedding lattice $\X_b$ with a conditional simulation from a periodic model on $\X_b$. Here, we propose an extension of the periodic imputation method to the estimation of multivariate cross spectral densities. At iteration $\ell$, the missing values $V$ are imputed according to a Gaussian process model with a covariance function that is periodic on $\X_b$,
\begin{align}
\bR^{(\ell)}(h) = \frac{1}{m} \sum_{\omega \in \F_b } \bmf^{(\ell)}(\omega) \exp(2\pi i \omega \cdot h),
\end{align}
where $\bm{f}^{(\ell)}$ is the CSD function at the $\ell$th iteration, and $m = b_1 \cdots b_d$ is the total number of locations in $\X_b$. With the complete dataset $(U,V)$, we update the spectrum estimate with $\bm{f}^{(\ell+1)} = \widehat{\bm{f}}[(U,V)]$ and iterate. After a burn-in period of $B$ iterations, we update with a weighted average of the previous and current spectrum, and we monitor changes in $\bm{f}^{(\ell)}$ for convergence. The full estimation algorithm is as follows:

\vspace{11pt}

\noindent \textbf{Estimation Algorithm}:

\vspace{6pt}

\noindent Initialize with $\bmf^{(1)}(\omega) = \widehat{\bm{f}}[(U,0)]$. For $\ell = 1,\ldots,B$,
\begin{enumerate}
\itemsep0pt
\item Simulate $V$ given $U$ from Gaussian process with covariance $\bR^{(\ell)}$.
\item Given $(U,V)$, set $\bm{f}^{(\ell+1)} = \widehat{\bm{f}}[(U,V)]$.
\end{enumerate}


\noindent For $\ell > B$
\begin{enumerate}
  \itemsep0pt
\item Simulate $V$ given $U$ from Gaussian process with covariance $\bR^{(\ell)}$.
\item Given $(U,V)$, set
\begin{align*}
\bm{f}^{(\ell+1)} = \frac{\ell-B-1}{\ell-B} {\bm{f}}^{(\ell)} + \frac{1}{\ell-B}\widehat{\bm{f}}[(U,V)].
\end{align*}
\item Stop if
\begin{align*}
\max_{j, \omega} \frac{ |f_{jj}^{(\ell+1)}(\omega) - f_{jj}^{(\ell)}(\omega)|}{ f^{(\ell)}_{jj}(\omega) } < \varepsilon
\end{align*}
and return CSD estimate $\widehat{\bm{f}} = \bm{f}^{(\ell+1)}$.
\end{enumerate}

\vskip11pt

Drawing Gaussian $V$ given $U$ is the main computational challenge for the iterative methods. All of the other operations are either pointwise multiplications, divisions, DFTs, or convolutions, which can be computed with a DFT. Drawing $V$ given $U$ involves an unconditional simulation from $\bm{R}^{(\ell)}$ on $\X_b$, which can be done with DFTs and pointwise multiplications. It also involves solving a linear system with the covariance matrix for $U$ under $\bm{R}^{(\ell)}$. This is the most demanding step. We use preconditioned conjugate gradient (PCG), which is an iterative method for solving positive definite linear systems \citep{saad2003iterative}. The forward multiplications in PCG can be done in $O(p m \log m + p^2 m)$ with circulant embedding techniques \citep{wood1994simulation}. We use a preconditioner based on Vecchia's approximation \citep{vecchia1988estimation} that can be computed in $O(pm)$ time.

\section{Factor Decompositions}

The CSD $\bm{f}(\omega)$ has an interpretation as the covariance matrix for $\bm{\cY}(\omega)$, the variation in $\bm{Y}$ at frequency $\omega$. This is admittedly difficult to communicate to a diverse audience. One solution is to simply report the estimated cross covariance function, the inverse DFT of $\widehat{\bm{f}}$. To further facilitate interpretability of a $p$-dimensional multivariate model, it can be useful to consider lower-dimensional representations for the CSD functions. To this end, we propose a factor model consisting of the linear model of coregionalization (LMC) plus a residual process $\bm{Z}(x)$,
\begin{align}
\bm{Y}(x) = \sum_{j=1}^J \bm{A}_j W_j(x) + \bm{Z}(x),
\end{align}
where $\bm{A}_j \in \R^p$, $W_1,\ldots,W_J$ are univariate process components of the LMC, independent of each other and of $\bm{Z}$, a $p$-variate spatial process with correlated components. The CSD function for this process is
\begin{align}\label{factorCSD}
\bmf(\omega) = \sum_{j=1}^J \bm{A}_j \bm{A}_j^T g_j(\omega) + \bm{h}(\omega),
\end{align}
where $g_j(\omega) > 0$ is the univariate spectral density for $W_j$, and $\bm{h}$ is the CSD function for $\bm{Z}$.

There is a clear lack of identifiability in assigning power from $\bm{f}(\omega)$ to either $\sum \bm{A}_j \bm{A}_j^T g_j(\omega)$ or $\bm{h}(\omega)$; for example, any model written
as \eqref{factorCSD} can be rewritten as $\bm{f}(\omega) = \bm{h}_1(\omega)$, where $\bm{h}_1(\omega) = \sum \bm{A}_j \bm{A}_j^T g_j(\omega) + \bm{h}(\omega)$. For this reason, we argue that it is not meaningful to ask for an ``optimal'' or ``true'' value of $J$. We take the position here that since the purpose of the factor model is to explain variation in the response with a lower-dimensional model, we should assign as much power as possible to the factor term over a range of values of $J$. Then the amount of variation explained by the factor term can be used as one way to distinguish among models.

\subsection{Profiled Optimization}

After the CSD function $\widehat{\bm{f}}$ has been estimated using the proposed periodic imputation methods, we aim to select $\bm{A}_1,\ldots,\bm{A}_J$ and $g_1,\ldots,g_J$ in order to minimize the amount of power assigned to the residual process $\bm{Z}$. To quantify the power, we consider the sum of the variances of the residual processes, which can be written in terms of the sum of traces of the residual CSD function,
\begin{align}
\sum_{k=1}^p \Var( Z_k(x) ) = \frac{1}{m} \sum_{\omega \in \F_b} \mbox{Tr}\big(\bm{h}(\omega)\big) = \frac{1}{m}\sum_{\omega \in \F_b} \mbox{Tr} \Big( \widehat{\bm{f}}(\omega) - \sum_{j=1}^J \bm{A}_j \bm{A}_j^T g_j(\omega) \Big),
\end{align}
subject to $g_j(\omega) > 0$ and $\widehat{\bm{f}}(\omega) - \sum \bm{A}_j\bm{A}_j^T g_j(\omega)$ nonnegative definite for each $\omega$. Even for $J = 1$, minimizing the loss function over all $g_j(\omega)$ is demanding because there are potentially thousands or millions individual frequencies $\omega \in \F_b$. This intractability has motivated a theoretical study to find closed-form minimizers over $g_j(\omega)$ for fixed $\bm{A} = (\bm{A}_1,\ldots,\bm{A}_J)$. This allows us to solve the global optimization problem by profiling out $g_j(\omega)$ and numerically minimizing the sum-of-variance criterion over $\bm{A}$. Further, we can assume without loss of generality that each $\bm{A}_j$ lies on the unit sphere because the radius of $\bm{A}_j$ can be absorbed into $g_j(\omega)$, so the numerical optimization is over a $J(p-1)$ space.

The sum-of-variance criterion can be rewritten as
\begin{align}
\frac{1}{m} \sum_{\omega \in \F_b} \sum_{k=1}^p \widehat{f}_{kk}(\omega) - \frac{1}{m} \sum_{\omega \in \F_b} \sum_{k=1}^p \sum_{j=1}^J A_{jk}^2 g_j(\omega) = \frac{1}{m} \sum_{\omega \in \F_b} \sum_{k=1}^p \widehat{f}_{kk}(\omega) - \sum_{\omega \in \F_b} \sum_{j=1}^J g_j(\omega) \sum_{k=1}^p A_{jk}^2.
\end{align}
The first term on the right does not depend on $g_j$ or $\bm{A}_j$, and so minimizing the criterion corresponds to maximizing the second term subject to $g_j(\omega) > 0$ and $\bm{h}(\omega)$ nonnegative definite for each $\omega$. For fixed $\bm{A}$ with $\sum_k A_{jk}^2 = \| \bm{A}_j \|_2 = 1$, this corresponds to maximizing $\sum_\omega \sum_j g_j(\omega)$. Since the nonnegative definiteness criterion applies separately to each $\omega$, we can maximize $\sum_j g_j(\omega)$ separately for each $\omega$.

The theorem gives a result for finding the minimizers over $g_1(\omega), \ldots, g_J(\omega)$ for fixed $\bm{A}$. Following the theorem are two corollaries that give specific results for the cases $J=1$ and $J=2$. To state the theorem, parameterize nonnegative $g_j(\omega)$ as $\exp( c_j )$, and define the $J \times J$ matrices
\begin{align}
C = \mbox{diag}( e^{c_1},\ldots,e^{c_J} ), \quad B = \bm{A}^T \widehat{\bm{f}}(\omega)^{-1} \bm{A}.
\end{align}
Further, define the Lagrangian function
\begin{align}
L( c_1,\ldots,c_J,\lambda) = e^{c_1} + \cdots e^{c_J} + \lambda \det( C^{-1} - B ).
\end{align}
\begin{theorem}\label{general_minimizer}
For fixed $\bm{A} = (\bm{A}_1,\ldots,\bm{A}_J)$, the minimizer of
\begin{align*}
\mbox{Tr}( \bm{h}(\omega) ) = \mbox{Tr}\Big( \widehat{\bm{f}}(\omega) - \sum_{j=1}^J e^{c_j} \bm{A}_j \bm{A}_j^T \Big)
\end{align*}
with respect to $c_1,\ldots,c_J$, subject to $\bm{h}(\omega)$ nonnegative definite, is either a solution to $\nabla L = \bm{0}$ or for some $k \in \{1,\ldots,J\}$, the minimizer of
\begin{align*}
\mbox{Tr}\Big( \widehat{\bm{f}}(\omega) - \sum_{j\neq k} e^{c_j} \bm{A}_j \bm{A}_j^T \Big).
\end{align*}
\end{theorem}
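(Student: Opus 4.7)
The plan is to reformulate the problem with a change of constraint, argue existence of a minimizer on a compactified feasible set, and then split into a regular boundary case and an asymptotic case. First, since $\|\bm{A}_j\|_2 = 1$ gives $\mbox{Tr}(\bm{A}_j\bm{A}_j^T) = 1$, the objective reduces to $\mbox{Tr}(\bm{h}(\omega)) = \mbox{Tr}(\widehat{\bm{f}}(\omega)) - \sum_{j=1}^J e^{c_j}$, so minimizing $\mbox{Tr}(\bm{h}(\omega))$ is equivalent to maximizing $\phi(c) = \sum_{j=1}^J e^{c_j}$. Then, applying the standard Schur-complement characterization of positive semidefiniteness to the block matrix $\left(\begin{smallmatrix} \widehat{\bm{f}}(\omega) & \bm{A} \\ \bm{A}^T & C^{-1} \end{smallmatrix}\right)$, the matrix inequality $\widehat{\bm{f}}(\omega) - \bm{A} C \bm{A}^T \succeq 0$ becomes equivalent to $C^{-1} - B \succeq 0$ (using $\widehat{\bm{f}}(\omega) \succ 0$ and $C \succ 0$), converting the matrix inequality into an algebraic constraint purely on the scalars $c_1,\ldots,c_J$.

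Next I would compactify by extending the parameter space to $[-\infty,\infty)^J$, interpreting $c_j = -\infty$ as $e^{c_j} = 0$ and correspondingly dropping the $j$-th term from both objective and constraint. On this extended domain $\phi$ is upper semi-continuous and the feasible region is closed; moreover, pushing any $c_j \to +\infty$ violates $C^{-1} \succeq B$, so $\phi$ is bounded above. A maximizer $c^*$ therefore exists. If $c^*_k = -\infty$ for some $k$, then by construction the remaining coordinates solve the minimization problem with the $k$-th factor deleted, which is precisely the second alternative in the theorem.

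If instead all $c^*_j$ are finite, then because $\phi$ is strictly increasing in each coordinate, the maximum cannot be attained in the interior of the feasible region, and so $c^*$ lies on the boundary $\{c : \det(C^{-1} - B) = 0\}$. Applying Lagrange multiplier theory to this scalar equality constraint yields the existence of $\lambda$ with $\nabla_c\bigl[\phi + \lambda\det(C^{-1}-B)\bigr] = \bm{0}$, which is exactly $\nabla L = \bm{0}$. Jacobi's formula gives the partial derivatives explicitly as $\partial_{c_j}\det(C^{-1}-B) = -e^{-c_j}\,\mbox{cof}_{jj}(C^{-1}-B)$, which will be useful for deducing the $J=1$ and $J=2$ corollaries but is not needed for the qualitative statement here.

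The main technical obstacle is the constraint qualification underpinning the last step. When $C^{-1}(c^*) - B$ has exactly a one-dimensional null space, $\det(C^{-1} - B)$ is smooth with nonvanishing gradient in a neighborhood of $c^*$ and standard Lagrange multiplier theory applies cleanly. When the null space is higher dimensional, the gradient of $\det$ itself vanishes and the single equation $\det = 0$ no longer captures all active constraints; there one must either perturb and approximate by regular boundary points, or observe that such a degenerate configuration can be identified with a solution of the reduced problem in the second alternative. Either route preserves the dichotomy stated in the theorem.
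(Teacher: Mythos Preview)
Your proposal is correct and follows essentially the same route as the paper: reduce to maximizing $\sum_j e^{c_j}$, use the Schur complement of the block matrix $\left(\begin{smallmatrix}\widehat{\bm f}(\omega) & \bm A\\ \bm A^T & C^{-1}\end{smallmatrix}\right)$ to recast the feasibility constraint as $C^{-1}-B\succeq 0$, argue that the maximum must lie on the boundary $\det(C^{-1}-B)=0$, and then invoke Lagrange multipliers while checking the endpoints $e^{c_j}=0$. Your version is in fact more careful than the paper's in two respects---you supply an explicit compactification argument for existence of the maximizer, and you flag the constraint-qualification issue when $C^{-1}-B$ has null space of dimension larger than one---both of which the paper's proof passes over silently.
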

\begin{proof}

Assume without loss of generality that $\| \bm{A}_j \|_2 = 1$ for every $j$. We first show that $\bm{h}(\omega)$ is nonnegative definite but not stricly positive definite at the minimizer. To establish a contradiction, suppose that $\bm{h}(\omega)$ is strictly positive definite at the minimizer. Thus there exists $\varepsilon > 0$ such that for any $\| \bm{u} \|_2 = 1$,
\begin{align*}
\bm{u}^T ( \widehat{\bm{f}}(\omega) - \sum_{j=1}^J \bm{A}_j \bm{A}_j^T e^{c_j} ) \bm{u} > \varepsilon.
\end{align*}
Now consider the quadratic form
\begin{align*}
\bm{u}^T ( \widehat{\bm{f}}(\omega) - \sum_{j=1}^J \bm{A}_j \bm{A}_j^T ( e^{c_j} + \varepsilon/J ) ) \bm{u} > \varepsilon - \frac{\varepsilon}{J} \sum_{j=1}^J (\bm{u}^T \bm{A}_j )^2
\geq \varepsilon - \varepsilon = 0.
\end{align*}
This establishes that each $e^{c_j}$ could have been increased by $\varepsilon/J$, and thus were not the minimizers of the trace, giving a contradiction.

This means that the determinant of $\bm{h}(\omega)$ is zero at the minimizer. Consider the matrix
\begin{align*}
\begin{bmatrix}
\widehat{\bm{f}}(\omega) & \bm{A} \\
\bm{A}^T & C^{-1}
\end{bmatrix}.
\end{align*}
Its determinant must be zero because it is equal to $\det(C^{-1})\det(\bm{h}(\omega))$. Its determinant is also equal to $\det( \widehat{\bm{f}}(\omega) ) \det( C^{-1} - B)$. We know that $\det(\widehat{\bm{f}}(\omega)) > 0$ because $\widehat{\bm{f}}(\omega)$ is strictly positive definite; therefore $\det( C^{-1} - B) = 0$ at the minimizer.

The theorem follows from using the method of Lagrange multipliers, that is, we seek to maximize $\sum_{j=1}^J e^{c_j}$ subject to $\det( C^{-1} - B ) = 0$, and checking the endpoints $e^{c_j} = 0$.

\end{proof}

The theorem states that the minimizers either solve the gradient of the Lagrangian or are the solution to a subproblem. This arises because we must check the boundaries of the parameter space when using the method of Lagrange multipliers. We must reparameterize $g_j(\omega) = \exp(c_j)$ because it is possible that the maximizer of $\sum_{j} g_j(\omega)$ subject to $\bm{h}(\omega)$ nonnegative definite includes one or more negative $g_j(\omega)$. The following two corollaries give concrete solutions for the cases $J=1$ and $J=2$.
\begin{corollary}\label{golemma1}
For fixed $\bm{A}_1$, $\mbox{Tr}( \bm{f}(\omega) - \bm{A}_1\bm{A}_1^T g_1(\omega) )$ is minimized by $g_1(\omega) = ( \bm{A}^T \bm{f}(\omega)^{-1}\bm{A}_1^T)^{-1}$.
\end{corollary}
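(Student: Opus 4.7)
The plan is to specialize Theorem \ref{general_minimizer} to the case $J=1$, where all of the matrix quantities in the theorem collapse to scalars. With $J=1$, one has $C = e^{c_1}$ and $B = \bm{A}_1^T \bm{f}(\omega)^{-1}\bm{A}_1$, both scalars, so the Lagrangian reads $L(c_1,\lambda) = e^{c_1} + \lambda(e^{-c_1} - B)$. Setting $\partial L / \partial \lambda = 0$ gives $e^{-c_1} = B$, and hence the candidate minimizer
\begin{align*}
g_1(\omega) = e^{c_1} = \big(\bm{A}_1^T \bm{f}(\omega)^{-1} \bm{A}_1\big)^{-1}.
\end{align*}
The remaining derivative condition $\partial L / \partial c_1 = 0$ is automatically compatible, giving $\lambda = e^{2c_1}$.

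Next I would rule out the alternative case allowed by Theorem \ref{general_minimizer}, namely the boundary minimizer in which $g_1$ is removed entirely (the $J=0$ subproblem). At that boundary, the objective equals $\mbox{Tr}(\bm{f}(\omega))$, whereas at the interior candidate above, the objective equals
\begin{align*}
\mbox{Tr}(\bm{f}(\omega)) - \|\bm{A}_1\|_2^2 \big(\bm{A}_1^T \bm{f}(\omega)^{-1} \bm{A}_1\big)^{-1}.
\end{align*}
Because $\bm{f}(\omega)$ is strictly positive definite, $\bm{A}_1^T \bm{f}(\omega)^{-1} \bm{A}_1 > 0$, and the interior value is strictly smaller. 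Hence the interior critical point is the true minimizer.

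As a sanity check, I would also verify this directly without invoking the full theorem: the objective $\mbox{Tr}(\bm{f}(\omega) - g_1 \bm{A}_1\bm{A}_1^T) = \mbox{Tr}(\bm{f}(\omega)) - g_1 \|\bm{A}_1\|_2^2$ is strictly decreasing in $g_1$, so the minimum occurs at the largest $g_1$ consistent with $\bm{f}(\omega) - g_1 \bm{A}_1\bm{A}_1^T \succeq 0$. By the matrix determinant lemma, $\det(\bm{f}(\omega) - g_1 \bm{A}_1\bm{A}_1^T) = \det(\bm{f}(\omega))\big(1 - g_1 \bm{A}_1^T \bm{f}(\omega)^{-1}\bm{A}_1\big)$, and the rank-one perturbation first loses positive definiteness exactly when this vanishes, yielding the stated $g_1(\omega)$.

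The main ``obstacle'' is really just bookkeeping: one must be careful that the Lagrange-multiplier condition from the theorem is the correct one to invoke (as opposed to the boundary case) and that the critical point is indeed a minimum. Both are easily handled in the $J=1$ scalar setting by the monotonicity argument above.
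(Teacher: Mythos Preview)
Your proof is correct and follows essentially the same route as the paper: specialize Theorem~\ref{general_minimizer} to $J=1$, where $C^{-1}=e^{-c_1}$ and $B=\bm{A}_1^T\widehat{\bm{f}}(\omega)^{-1}\bm{A}_1$ are scalars, and read off $e^{c_1}=1/B$ from $\det(C^{-1}-B)=0$. You go further than the paper by explicitly ruling out the boundary subproblem and adding the direct monotonicity/matrix-determinant-lemma argument, which in fact makes your version more complete.
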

\begin{proof}
When $J=1$, $C^{-1} = e^{-c_1}$, and $B = B_{11} = \bm{A}_1^T \widehat{\bm{f}}(\omega)^{-1} \bm{A}_1$. Therefore, $\det(C^{-1}-B) = 0$ iff $e^{-c_1} = B_{11}$ iff $e^{c_1} = 1/B_{11}$.
\end{proof}
\begin{corollary}\label{golemma2}
For fixed $\bm{A}_1,\bm{A}_2$, $\mbox{Tr}( \bm{f}(\omega) - \bm{A}_1 \bm{A}_1^T g_1(\omega) - \bm{A}_2 \bm{A}_2^T g_2(\omega) )$ is minimized by either
\begin{enumerate}
\item $g_1(\omega) = 0$ and $g_2(\omega) = 1/B_{22}$,
\item $g_2(\omega) = 0$ and $g_1(\omega) = 1/B_{11}$,
\item Or
\begin{align*}
g_1(\omega) &= \frac{B_{22} - \sqrt{B_{12}B_{21}}}{\det(B)}\\
g_2(\omega) &= \frac{B_{11} - \sqrt{B_{12}B_{21}}}{\det(B)}.
\end{align*}
\end{enumerate}
\end{corollary}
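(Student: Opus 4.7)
The plan is to apply Theorem \ref{general_minimizer} with $J=2$, which splits the problem into two cases: the minimizer is either an interior critical point of the Lagrangian $L$, or it sits on the boundary where one of $g_1(\omega), g_2(\omega)$ vanishes. The boundary cases reduce immediately to the $J=1$ setting, and applying Corollary \ref{golemma1} with the remaining direction yields items 1 and 2 directly.

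For the interior case, I would write the $2 \times 2$ determinant explicitly as
\begin{align*}
\det(C^{-1} - B) = (e^{-c_1} - B_{11})(e^{-c_2} - B_{22}) - B_{12} B_{21},
\end{align*}
and then compute $\nabla L = \bm{0}$. Writing $g_j = e^{c_j}$, the $c_1$ and $c_2$ stationarity equations take the symmetric form $g_j^2 = \lambda(1/g_k - B_{kk})$ for $\{j,k\} = \{1,2\}$, and the $\lambda$ equation is the constraint $\det(C^{-1}-B)=0$. Multiplying the two stationarity equations and invoking the constraint eliminates the dependence on $1/g_j - B_{jj}$ and gives $(g_1 g_2)^2 = \lambda^2 B_{12} B_{21}$, so $\lambda = g_1 g_2/\sqrt{B_{12}B_{21}}$ after selecting the positive root. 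Back-substitution into either stationarity equation yields the $2 \times 2$ linear system
\begin{align*}
B_{11}\, g_1(\omega) + \sqrt{B_{12}B_{21}}\, g_2(\omega) &= 1, \\
\sqrt{B_{12}B_{21}}\, g_1(\omega) + B_{22}\, g_2(\omega) &= 1,
\end{align*}
and Cramer's rule with determinant $B_{11}B_{22} - B_{12}B_{21} = \det(B)$ produces the closed forms in item 3.

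The main obstacle is the algebraic bookkeeping required to eliminate $\lambda$ and to select the correct branch of the square root, since the negative branch would produce negative $g_j$. Two observations facilitate the derivation: first, $B$ is Hermitian because $\widehat{\bm{f}}(\omega)$ is, so $B_{12}B_{21} = |B_{12}|^2 \geq 0$ and $\sqrt{B_{12}B_{21}}$ is well-defined as a nonnegative real number; second, the symmetric structure of the stationarity equations ensures that the elimination produces a linear (rather than higher-degree) system in $(g_1, g_2)$. To complete the argument, one enumerates the three candidates listed in the statement, compares the objective values $g_1(\omega) + g_2(\omega)$, and discards any that violate $g_j \geq 0$ or the nonnegative definiteness of $\bm{h}(\omega)$, thereby identifying the actual minimizer among the three possibilities.
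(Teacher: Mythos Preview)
Your proposal is correct and follows essentially the same route as the paper: apply Theorem~\ref{general_minimizer} to split into boundary cases (handled by Corollary~\ref{golemma1}) and an interior case, then solve $\nabla L=\bm{0}$ for the interior critical point. Your algebra is in fact more explicit than the paper's, which eliminates $\lambda$ by dividing the two stationarity equations to obtain $e^{c_1}(1-B_{11}e^{c_1})=e^{c_2}(1-B_{22}e^{c_2})$ and then simply asserts that ``one can verify'' the stated formulas solve this together with the constraint; your route of multiplying the equations, invoking the constraint to isolate $\lambda$, and reducing to a $2\times 2$ linear system solved by Cramer's rule is a cleaner derivation of the same answer.
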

\begin{proof}

Setting $\nabla L = \bm{0}$ yields the three equations
\begin{align*}
e^{c_1} + \lambda [ e^{-c_1}(e^{-c_2} - B_{22}) ] &= 0 \\
e^{c_2} + \lambda [ e^{-c_2}(e^{-c_1} - B_{11}) ] &= 0 \\
(e^{-c_1} - B_{11})(e^{-c_2} - B_{22}) - B_{12} B_{21} &= 0.
\end{align*}
Eliminating $\lambda$ gives the following two equations
\begin{align*}
e^{c_1}( 1 - B_{11}e^{c_1}) = e^{c_2}( 1 - B_{22}e^{c_2}) \\
(e^{-c_1} - B_{11})(e^{-c_2} - B_{22}) - B_{12} B_{21} &= 0.
\end{align*}
One can verify that the solution is
\begin{align*}
e^{c_1} &= \frac{B_{22} - \sqrt{B_{12}B_{21}}}{\det(B)}\\
e^{c_2} &= \frac{B_{11} - \sqrt{B_{12}B_{21}}}{\det(B)}.
\end{align*}
According to Theorem 1, the solution is either the above, or the solution to one of the one-factor problems, establishing the corollary.

\end{proof}

\subsection{Conditional Expectation of Component Processes}

The conditional expectations of the $j$th factor component given the data is
\begin{align*}
E( W_j(x) | U ) &= E_{V|U}[ E( W_j(x) | U, V ) ] \\
 &= E_{V|U} \bigg[ E\bigg( \frac{1}{\sqrt{m}} \sum_{\omega \in \F_b} \mathcal{W}_j(\omega) e^{i\omega \cdot x} \, \bigg| \bm{\mathcal{Y}}(\F_b) \bigg) \bigg] \\
 &=  \frac{1}{\sqrt{m}} E_{V|U} \bigg[  \sum_{\omega \in \F_b} E\Big( \mathcal{W}_j(\omega) | \bm{\mathcal{Y}}(\omega) \Big)  \bigg] e^{i\omega \cdot x } \\
&= \frac{1}{\sqrt{m}}   \sum_{\omega \in \F_b} g_j(\omega) \bm{A}_j^T \bm{f}(\omega)^{-1}  E \big[ \bm{\mathcal{Y}}(\omega) \, | \, U \big] e^{i\omega \cdot x }.
\end{align*}
The conditional expectation of $\bm{Y}(\omega)$ given $U$ is
\begin{align*}
E(\bm{\cY}(\omega) | U ) &= E\Big( \frac{1}{\sqrt{m}} \sum_{x \in \X_b} \bm{Y}(x) e^{-i\omega \cdot x} \Big| U \Big) = \frac{1}{\sqrt{m}} \sum_{x \in \X_b} E( \bm{Y}(x) | U ) e^{-i\omega \cdot x},
\end{align*}
where $\mathcal{W}_j$ is the DFT of $W_j$. Therefore, in order to compute the conditional expectation of $W_j$, we simply compute $E(V|U)$, then take the DFT of $(U, E(V|U))$ to obtain $E( \bm{\cY}(\omega)|U)$, and then take the inverse DFT of
\begin{align*}
g_j(\omega)\bm{A}_j^T \bm{f}(\omega)^{-1}E(\bm{\cY}(\omega)|U).
\end{align*}
We use the expected factors in Section \ref{datasection} to explore the decomposition of the estimated spectrum from a thunderstorm dataset.

\section{Simulations}\label{simulationsection}

The purpose of the simulation study is to demonstrate some of the computational issues with maximum likelihood estimation of the multivariate Mat\'ern parameters, and show how our proposed methods are capable of producing fast and accurate estimates of the CSD function, even when we do not assume knowledge of the parametric form of the model.

The cross covariances in the multivariate Mat\'ern model introduced by \cite{gneiting2010matern} can be written in terms of the Mat\'ern function,
\begin{align}
\Cov(Y_j(x), Y_k(x+h)) = K_{jk}(h) = \sigma_{jk} \mathcal{M}( \| h \| \alpha_{jk} \, ; \, \nu_{jk} ), \quad \mathcal{M}(r; \nu) = \frac{ r^{\nu} }{2^{\nu-1} \Gamma(\nu) } \mathcal{K}_{\nu}(r);
\end{align}
$\mathcal{M}(r \, ; \nu)$ is the Mat\'ern function, written in terms of $\mathcal{K}_\nu$, a modified Bessel function of the second kind, with parameter $\nu$.
The parameters $\sigma^2_{jj}$, $\alpha_{jj}$, and $\nu_{jj}$ represent the marginal variance, the inverse range, and the smoothness of $Y_j$. These marginal parameters must all be positive. To ensure positive definiteness of the multivariate Mat\'ern, \cite{gneiting2010matern} give conditions on the parameters in the bivariate Mat\'ern model, and \cite{apanasovich2012valid} give conditions on the parameters in a more general $p$-variate setting.

In the simulation study, we consider $p=2, 3$, and $4$. The inverse range and smoothness parameters are $\alpha_{jk} = 0.25$ and $\nu_{jk} = 0.5 + 0.5(j+k-2)/(2p-2)$. To define the variance parameters, let $\beta$ be a $p\times p$ matrix with $\beta_{jk} = 0.8^{|j-k|}$. The variance parameters are
\begin{align}
\sigma_{jk} = jk\frac{\Gamma(\nu_{jk})}{\Gamma(\nu_{jk}+1)} \frac{\sqrt{\Gamma(\nu_{jj}+1)\Gamma(\nu_{kk}+1)}}{\sqrt{\Gamma(\nu_{jj})\Gamma(\nu_{kk})}}\beta_{jk}.
\end{align}
This particular model parameterization is from the parsimonious multivariate Mat\'ern family \citep{gneiting2010matern}.  Table \ref{parametertable} presents the specific parameter values used when $p=3$. Due to the symmetry constraints $\sigma_{jk} = \sigma_{kj}$, $\alpha_{jk} = \alpha_{kj}$, and $\nu_{jk} = \nu_{kj}$, the full multivariate Mat\'ern has $3p(p+1)/2$ unique parameters. For $p=2$, this is 6 parameters; for $p = 3$, 18 parameters; for $p=4$, 30 parameters. A mentioned in Section \ref{introduction}, the large number of free parameters is one aspect of the computational difficulty of working with the multivariate Mat\'ern model.

We specify grid size $(16,16)$ and simulate 70 datasets for each value of $p \in \{2,3,4\}$ with no missing values on the grid. To estimate multivariate Mat\'ern parameters, we use the exact Gaussian likelihood function, maximized using the \verb!optim! function in R with the default Nelder-Mead algorithm. The estimation procedure makes no assumptions that the true model is a member of the parsimonious multivariate Mat\'ern family. We run the Nelder-Mead algorithm for $6000$ iterations, stopping at each $1000$ iterations to report estimation progress and elapsed time.

\begin{table}
\centering

\begin{tabular}{ccc|ccc|ccc}
\multicolumn{3}{c}{Variance $\sigma_{jk}$} & \multicolumn{3}{c}{Inverse Range $\alpha_{jk}$} & \multicolumn{3}{c}{Smoothness $\nu_{jk}$} \\
\hline
1.00 & 1.57 & 1.81 & 0.25 & 0.25 & 0.25 & 0.500 & 0.625 & 0.750 \\
1.57 & 4.00 & 4.75 & 0.25 & 0.25 & 0.25 & 0.625 & 0.750 & 0.875 \\
1.81 & 4.75 & 9.00 & 0.25 & 0.25 & 0.25 & 0.750 & 0.875 & 1.000
\end{tabular}
\caption{ \label{parametertable} Parameter matrices for simulation study when $p=3$. }
\end{table}

\begin{figure}
\centering
\includegraphics[width=0.8\textwidth]{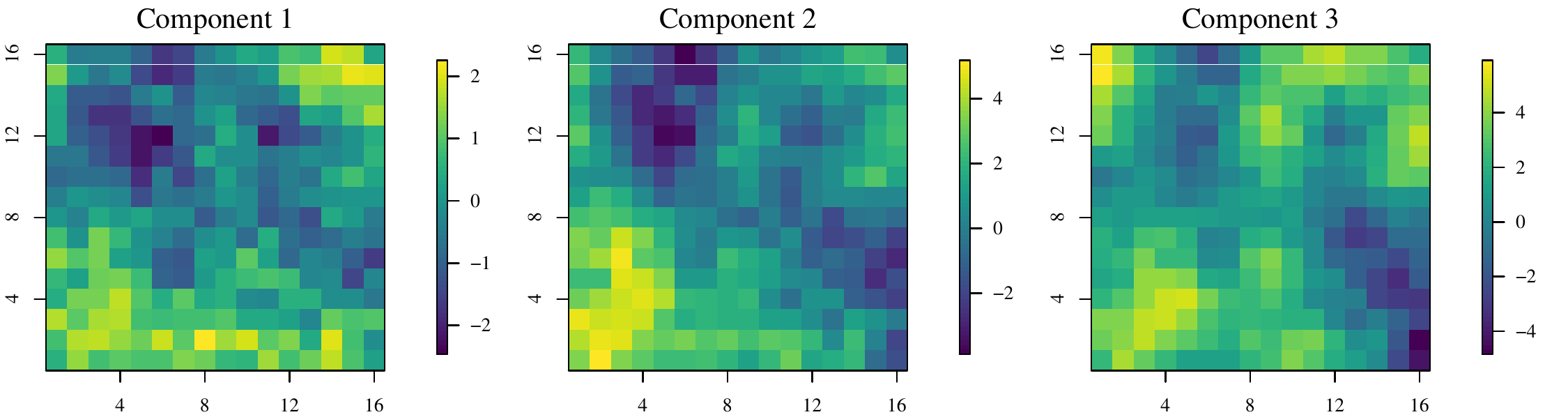}
\caption{ \label{multi_matern_sim} Example realization from multivariate Mat\'ern. }
\end{figure}

For our proposed periodic imputation methods, we take three choices of expansion parameter $\tau \in \{1.00, 1.25, 1.50\}$, and we use the parametric variant of the iterative algorithm, run for 50 burn-in iterations, and an averaging tolerance of $\varepsilon = 0.01$. We use a Gaussian smoothing kernel, with four bandwidth choices in $\{0.15,0.20,0.25,0.30\}$; a bandwidth of $b$ corresponds to $(100b)\%$ of the frequency domain. For the parametric filter, we use the quasi-Mat\'ern spectral density
\begin{align}
f_\theta(\omega) = \sigma^2\Big( 1 + \alpha^{-2} \big( \sin^2(\omega_1/2) + \sin^2(\omega_2/2) \big) \Big)^{-\nu - d/2}.
\end{align}
The parameter $\sigma^2$ can be profiled out, so at each iteration, Whittle's likelihood is optimized over the two-dimensional parameter space $(\alpha,\nu)$ for each of the $p$ components.

To evaluate the estimators, we consider a spectral norm criterion
\begin{align}
 \frac{1}{m} \sum_{\omega \in \F_b} \lambda_{\max}\Big[ \bm{f}^{-1/2}(\omega) \Big( \widehat{\bm{f}}(\omega) - \bm{f}(\omega) \Big) \bm{f}^{-1/2}(\omega) \Big]
\end{align}
where $\lambda_{\max}(A)$ is the largest absolute eigenvalue of matrix $A$, $\bm{f}$ is the true CSD function, and $\widehat{\bm{f}}$ is the estimate.

The results of the simulation study are given in Table \ref{simstudy_results}. When no periodic embedding is used ($\tau = 1.0$), the estimates are computed extremely fast but are poor relative to the other estimators. The poor performance when $\tau=1.0$ highlights the fact that spectral estimators based on the periodogram of the observed data can suffer from severe edge effects. The periodically embedded estimators are nearly as accurate or more accurate than maximum likelihood, even after 6000 Nelder-Mead iterations. When $p=2$ the periodic imputation estimators are slightly worse in terms of spectral norm, whereas the periodic imputation estimators are better when $p=3$ and $p=4$, even though no knowledge of the true CSD functions is assumed. Further, the periodic imputation methods are much faster. They converge on average in less than 25 seconds in every case, whereas the maximum likelihood estimates converge in several thousand iterations and take on the order of one to three hours.

\begin{table}
\centering
\begin{tabular}{ccccc|cccccc}
 && \multicolumn{3}{c}{Periodic Imputation} & \multicolumn{6}{c}{Max.\ Lik.\ Multivariate Mat\'ern } \\
 && \multicolumn{3}{c}{Expansion Factor $\tau$} & \multicolumn{6}{c}{Nelder-Mead Iterations} \\
components & quantile & 1.00 & 1.25 & 1.50 & 1000 & 2000 & 3000 & 4000 & 5000 & 6000 \\
 \hline
      &  0.75 &   2.540 &    0.304 &    0.295 &   0.341 &    0.275 &    0.272 &    0.245 &    0.242 &    0.239  \\
$p=2$ &  0.50 &   1.630 &    0.276 &    0.266 &   0.277 &    0.218 &    0.208 &    0.204 &    0.204 &    0.201  \\
      &  0.25 &   1.206 &    0.253 &    0.243 &   0.200 &    0.181 &    0.170 &    0.173 &    0.170 &    0.172  \\
      &  time & 0.078 &  0.227 &  0.261 &  16.2 &   33.1 &   50.0 &   66.5 &   83.0 &   98.3  \\
   \hline
      &  0.75 &   2.494 &    0.379 &    0.358 &   0.958 &    0.563 &    0.526 &    0.504 &    0.487 &    0.488  \\
$p=3$ &  0.50 &   1.767 &    0.337 &    0.328 &   0.686 &    0.487 &    0.455 &    0.441 &    0.440 &    0.428  \\
      &  0.25 &   1.300 &    0.306 &    0.302 &   0.579 &    0.415 &    0.392 &    0.386 &    0.361 &    0.360  \\
      &  time & 0.081 &  0.273 &  0.296 &  22.2 &   44.4 &   67.4 &   89.8 &  112.2 &  138.3  \\
   \hline
      &  0.75 &   2.654 &    0.429 &    0.407 &   2.847 &    1.197 &    0.834 &    0.801 &    0.719 &    0.681  \\
$p=4$ &  0.50 &   1.888 &    0.398 &    0.383 &   2.096 &    0.927 &    0.720 &    0.667 &    0.605 &    0.594  \\
      &  0.25 &   1.412 &    0.372 &    0.358 &   1.627 &    0.713 &    0.587 &    0.535 &    0.521 &    0.504  \\
      &  time & 0.090 &  0.363 &  0.416 &  29.8 &   59.8 &   90.2 &  118.1 &  147.4 &  175.9 \\
   \hline
\end{tabular}
\caption{\label{simstudy_results} Simulation study results, showing 0.25, 0.50 and 0.75 quantiles of spectral norms over 70 simulation replicates. Periodic imputation methods evaluated at three expansion factors and with kernel bandwidth 0.30. Maximum likelihood conducted in R with optim function using Nelder-Mead algorithm. Time is average time in minutes over 70 simulation replicates.}
\end{table}

The simulations were conducted on a single node of the Cheyenne supercomputer managed by the National Center for Atmospheric Research. Each node has 36 2.3-GHz Intel Xeon E5-2697V4 processors, and each processor has 2 threads. One processor was used to manage the parallel tasks, and each of the remaining 35 processors analyzed two datasets in parallel on the two threads. The computing times in Table \ref{simstudy_results} should be understood in the context that only a single thread was used to analyze each dataset. We can expect up to, for example, a four-fold speedup for each task when running the code on a two core machine with four total threads.

\section{Multivariate Spatial-Temporal Analysis of Storms}\label{datasection}

Launched in November 2016, the GOES-16 satellite now sits in geostationary orbit, allowing its advanced baseline imager (ABI) to capture images of the Earth's atmosphere in 16 wavelength bands at up to 500m resolution in space and up to 30 seconds in time. GOES 16 produces terabytes of data per day and provides a wealth of information about atmospheric processes. In this section, we perform a multivariate spatial-temporal analysis of images from two separate storms, one of which is an ordinary convective storm formed over Florida from July 22, 2018, while the other is from Hurricane Florence on September 14, 2018, the day it made landfall with the southeastern U.S. coastline. The analysis is demonstrative of the sort of comparisons that are possible with our proposed methods.

For each storm, we analyze 60 images from 4 wavelength bands separated by 1 minute. Band 1 has a native resolution of 1km, while the resolution of the other bands is 2km. We locally average the Band 1 data to coincide with the resolution of the other bands. Figure \ref{clouds_with_polygons} contains images from Band 1 at the beginning, middle, and end of the hour. A stationary model is unrealistic for the entire scene, and so we subset the data to the black polygons in Figure \ref{clouds_with_polygons}, where the stationary assumption is more tenable. Figure \ref{clouds_multiband} contains the subsetted images from the middle time point at the four wavelength bands. Table \ref{band_table} contains information about the bands selected for the analyses. We choose one band in the visible spectrum since the visible spectrum bands are all highly correlated due to the grayscale color of clouds; Band 6 contains information about particle sizes; Band 7 can be used for ice detection; and Band 9 contains information about water vapor content in the mid-troposphere \citep{abifactsheet}

The ordinary storm has observation lattice size $a = (55,57,60)$ in longitude, latitude, time, with 253{,}504 total observations; Florence has $a = (47,31,60)$, with 167{,}804 total observations. For both storms, we specify expansion factor $\tau = 1.25$, extending the lattice by 25\% in each of the three dimensions, and we again use the quasi Mat\'ern parametric spectral density. Before estimation, we subtract off the sample mean $\widehat{\mu}_k$ from each component. We use $B=20$ burn-in iterations, and we specify convergence tolerance parameter $\varepsilon = 0.005$. The ordinary storm CSD estimates converge after 6 averaging iterations, which took 3.40 hours total for burn-in and convergence. Florence converged after 15 iterations and took 1.43 hours. All computations are in the R programming language and run on an Intel Core i5-7200 CPU (2 cores, 4 threads at 2.50GHz) with 8GB memory. Vecchia's preconditioner is implemented in C++ with the Rcpp package \citep{rcpp}.

To visualize the fits, we include plots of the estimated real part of the coherences for the two storms in Figure \ref{coh}. We see that for the ordinary storm, the coherence between Bands 1,6, and 7 persists throughout the frequency domain, while Band 9 has a slight negative coherence with the other bands throughout the domain. The situation is quite different for the Hurricane Florence data; Band 9 has a small amount of positive coherence with Bands 1 and 7 and is negatively coherent with Band 6. The coherences weaken at the highest frequencies. Bands 1, 6, and 7 are weakly coherent.

\begin{table}
\centering
\begin{tabular}{ccc}
& Wavelength & Nickname \\
\hline
Band 1 & 0.47 microns & Blue Band \\
Band 6 & 2.2 microns & Cloud Particle Size Bands \\
Band 7 & 3.9 microns & Shortwave Window Band \\
Band 9 & 6.9 microns & Mid-Level Tropospheric Water Vapor Band
\end{tabular}
\caption{ \label{band_table} Information about bands used in multivariate spatial-temporal analysis. }
\end{table}

\begin{figure}
\centering
\includegraphics[width=0.8\textwidth]{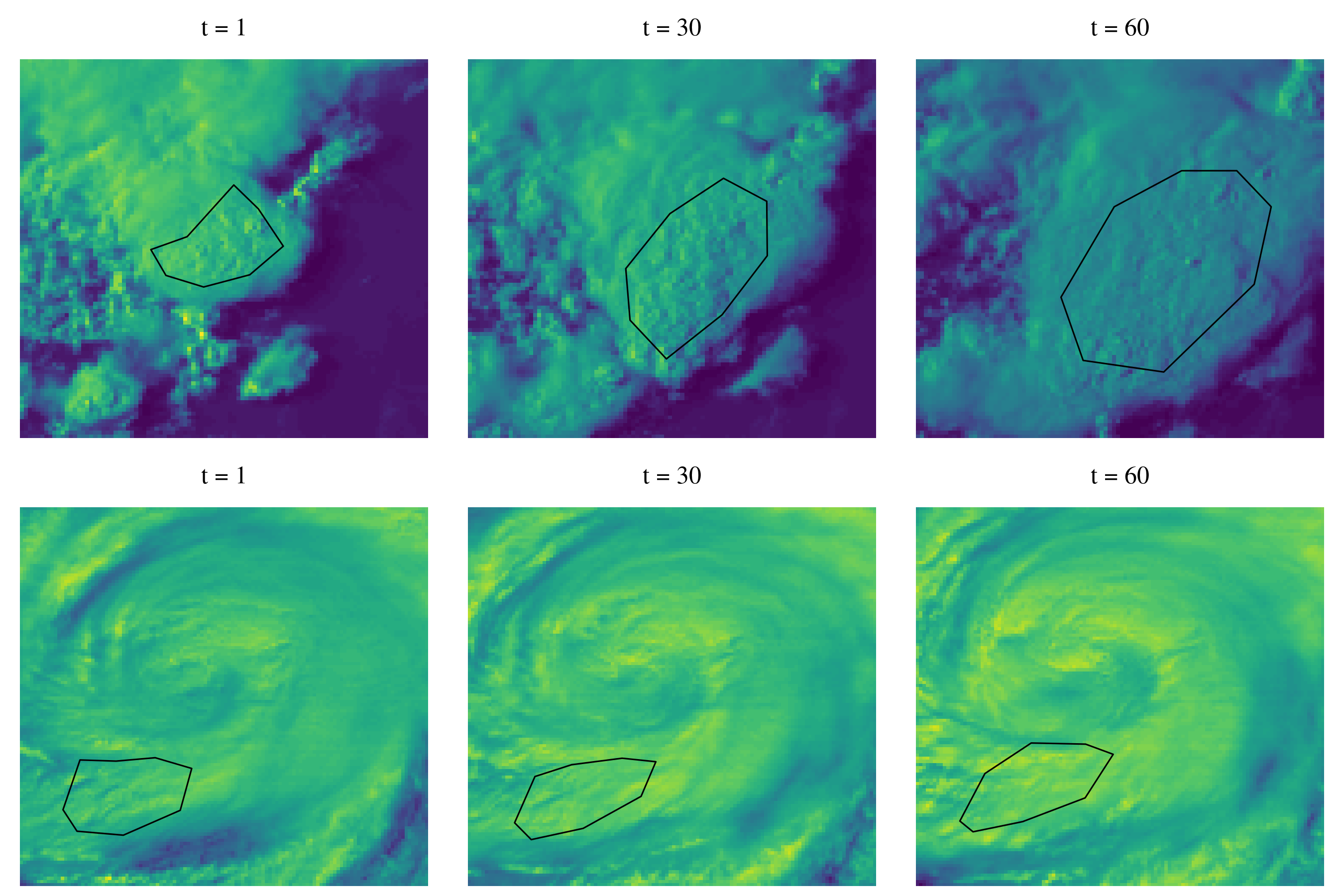}
\caption{\label{clouds_with_polygons} Data from ordinary storm (top) and from Hurricane Florence (bottom) at three time points (of 60), with subsetted data indicated by black polygons. Band 1 (``blue'' band) plotted.}
\end{figure}

\begin{figure}
\centering
\includegraphics[width=0.8\textwidth]{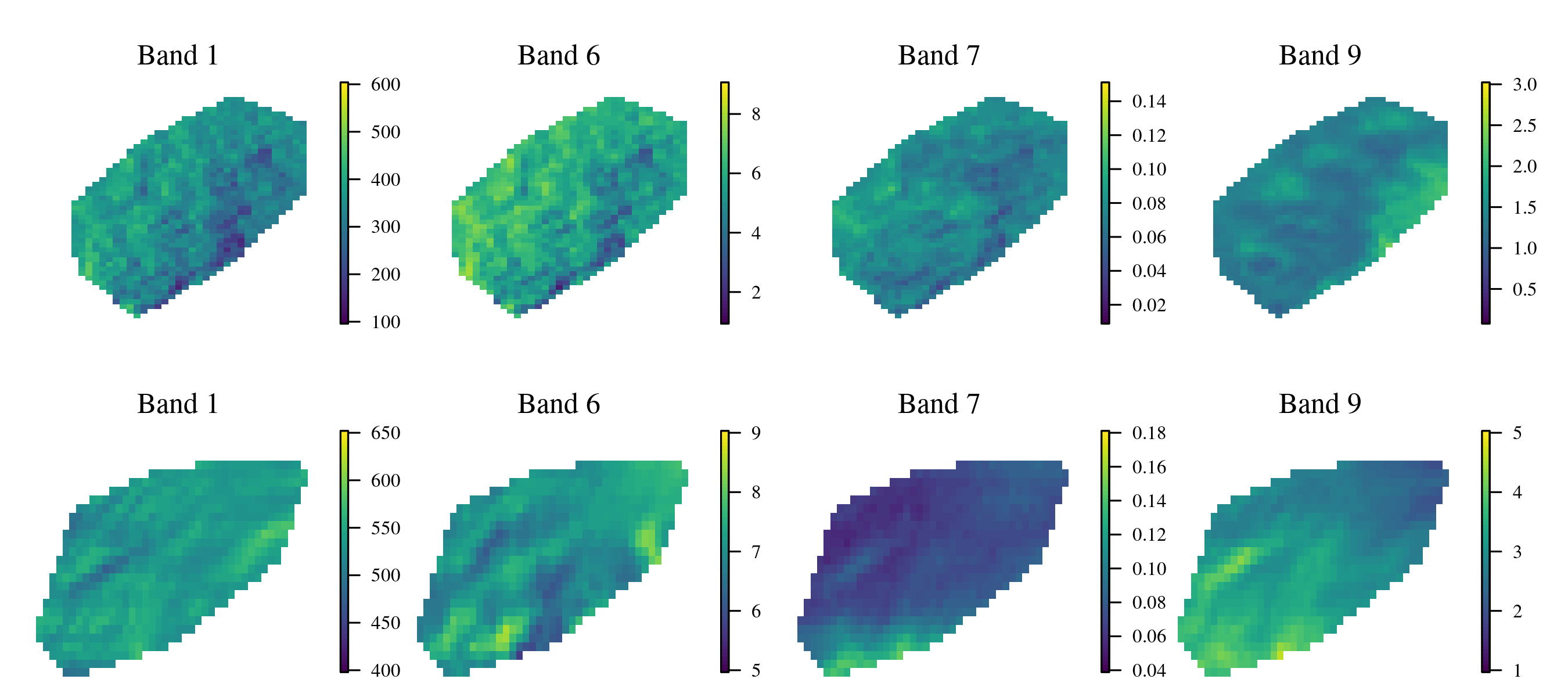}
\caption{\label{clouds_multiband} Data from ordinary storm (top) and from Hurricane Florence (bottom) at $t=30$, from Bands 1, 6, 7, and 9. }
\end{figure}

\begin{figure}
\centering
\includegraphics[width=0.29\textwidth]{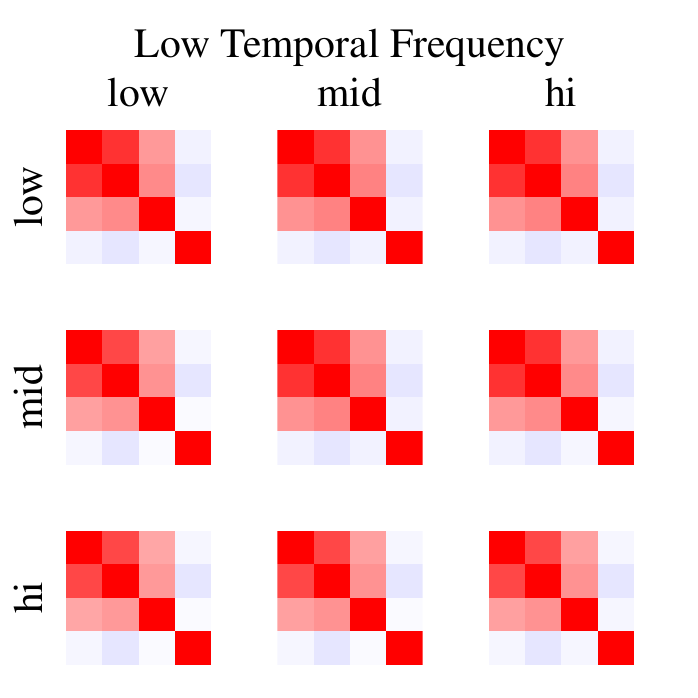}
\includegraphics[width=0.29\textwidth]{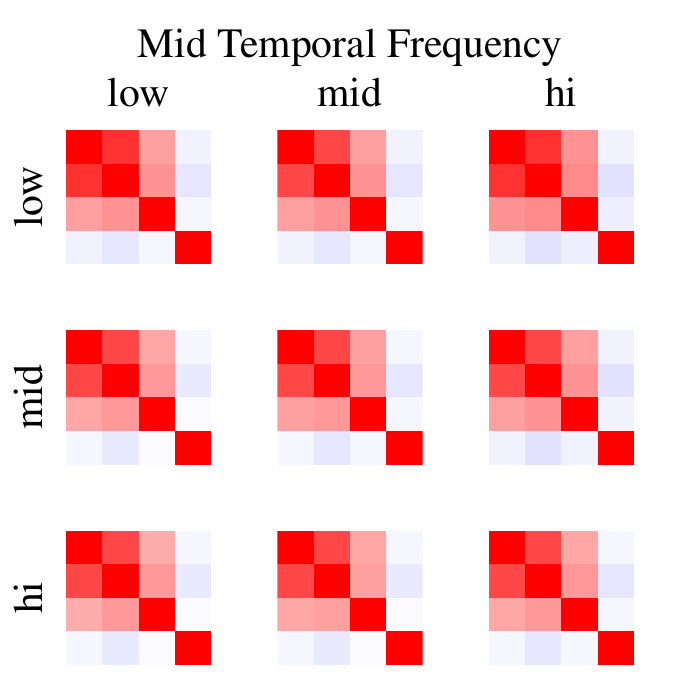}
\includegraphics[width=0.29\textwidth]{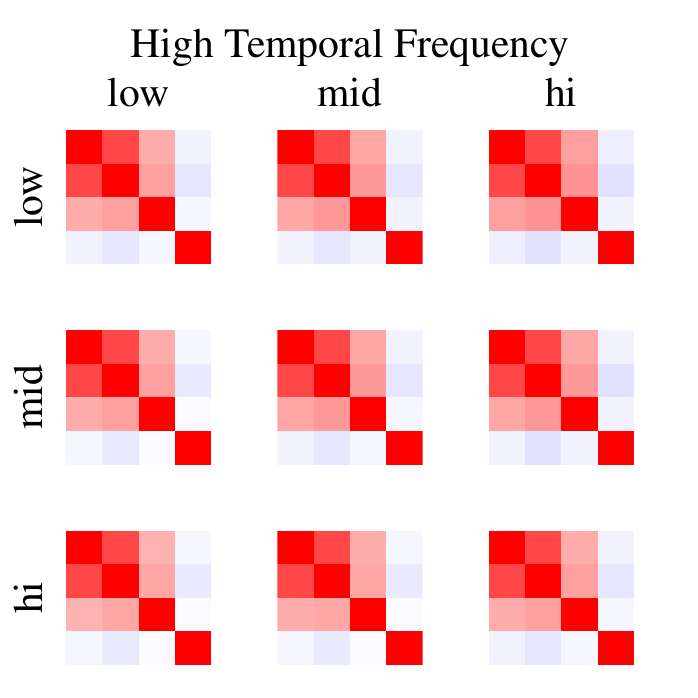}
\includegraphics[width=0.065\textwidth]{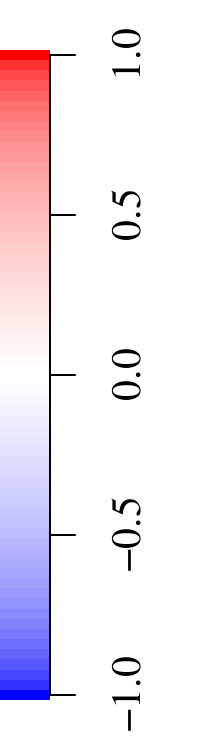}
\includegraphics[width=0.29\textwidth]{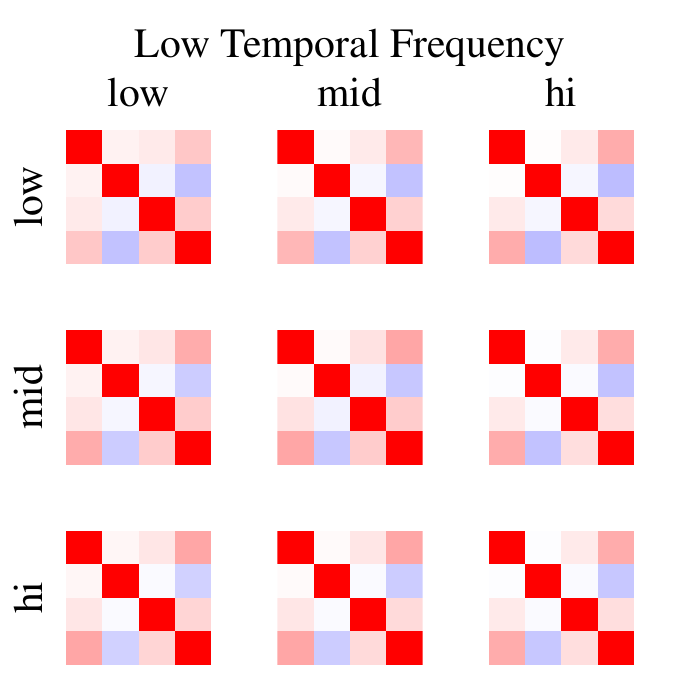}
\includegraphics[width=0.29\textwidth]{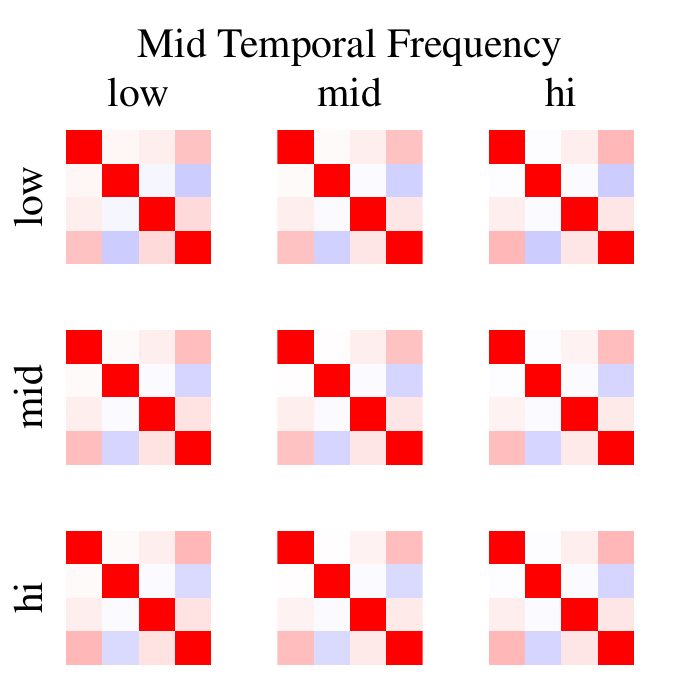}
\includegraphics[width=0.29\textwidth]{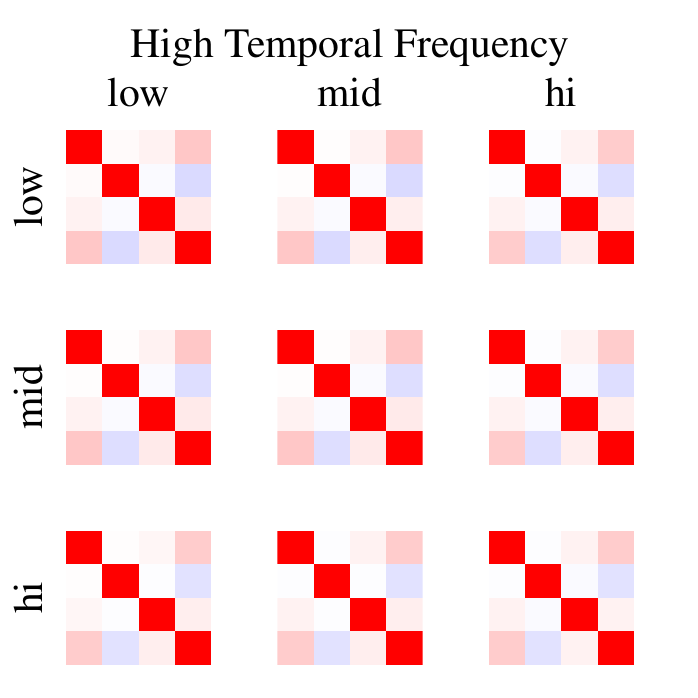}
\includegraphics[width=0.065\textwidth]{coh_legend.pdf}
\caption{\label{coh} Estimated real part of coherences for ordinary storm (top row) and Hurricane Florence (bottom row) at various frequencies $\omega = (\omega_1,\omega_2,\omega_3)$. ``Low'' refers to $\omega_j = 0$, ``mid'' to $\omega_j = 1/4$, and ``high'' to $\omega_j = 1/2$. In each $3 \times 3$ group of coherence matrices, rows are corresond to latitudinal frequencies, columns to longitudinal frequencies. }
\end{figure}

The total variation differs widely between the four bands; for example, in the ordinary storm, the estimated standard deviations are $41.15,  0.70,  0.01$ and $0.20$. These differences in variation are not necessarily a reflection of the relative importance of the bands, so instead of decomposing the estimated CSD functions $\widehat{\bm{f}}$ directly--which would undoubtedly return vectors $\bm{A}_j$ attempting to explain variation in Band 1--we decompose the normalized CSD function
\begin{align}
\widetilde{f}_{jk}(\omega) = \frac{\widehat{f}_{jk}(\omega)}{\sqrt{\widehat{C}_{jj}(0) \widehat{C}_{kk}(0) }},
\end{align}
where $\widehat{C}_{jj}(0)$ is estimated covariance function at lag zero, that is, the estimated variance. Table 1 provides a summary of the decompositions. For the ordinary storm, the $J=1$ decomposition puts most of its weight on Bands 1, 6, and 7. When $J=2$, $\bm{A}_1$ is nearly unchanged, and $\bm{A}_2$ explains variation in Band 9. For the Hurricane Florence data, when $J=1$, $\bm{A}_1$ points mostly in the direction of Bands 1, 6, and 9. When $J=2$, $\bm{A}_1$ changes little, and $\bm{A}_2$ explains common variation in Bands 1 and 6.

\begin{table}
\centering
\begin{tabular}{cr|rr|r|rr}
& \multicolumn{3}{c}{Ordinary Storm} & \multicolumn{3}{c}{Hurricane Florence} \\
 & \multicolumn{1}{c}{$J=1$} & \multicolumn{2}{c}{$J=2$} & \multicolumn{1}{c}{$J=1$} & \multicolumn{2}{c}{$J=2$} \\
\hline
Band & $\bm{A}_1$ & $\bm{A}_1$ & $\bm{A}_2$ & $\bm{A}_1$ & $\bm{A}_1$ & $\bm{A}_2$ \\
1 & $-0.498$  & $-0.492$  & $0.000$  & $-0.444$ & $-0.446$ & $-0.642$  \\
6 & $-0.634$  & $-0.634$  & $0.000$  & $0.463$ & $0.458$ & $-0.763$  \\
7 & $-0.590$  & $-0.594$  & $0.000$  & $-0.107$ & $-0.104$ & $-0.031$  \\
9 & $0.049$  & $0.054$  & $-1.000$  & $-0.760$ & $-0.762$ & $-0.064$  \\
\hline
\% explained & 43.3\% &  \multicolumn{2}{c|}{ 58.5\% } & 34.0\% &  \multicolumn{2}{c}{ 55.1\% }
\end{tabular}
\caption{\label{factor_table} Summary of factor decompositions for the two storm datasets. }
\end{table}

Lastly, in Figure \ref{storm_factors}, we plot the original ordinary storm data from $t=30$, and the expected $J=1$ and $J=2$ factor representations of each band $k$,
\begin{align}
\widehat{\mu}_k + \widehat{C}_{kk}(0)\sum_{j=1}^J A_{jk} E(W_j(x)|U).
\end{align}
We can see that the one-factor representation ignores variation in Band 9 but captures much of the variation in Bands 1, 6, and 7. This is expected since these three bands are strongly coherent across frequencies. The two-factor captures some of the variation in Band 9 but still oversmooths.

\begin{figure}
\centering
\includegraphics[width=\textwidth]{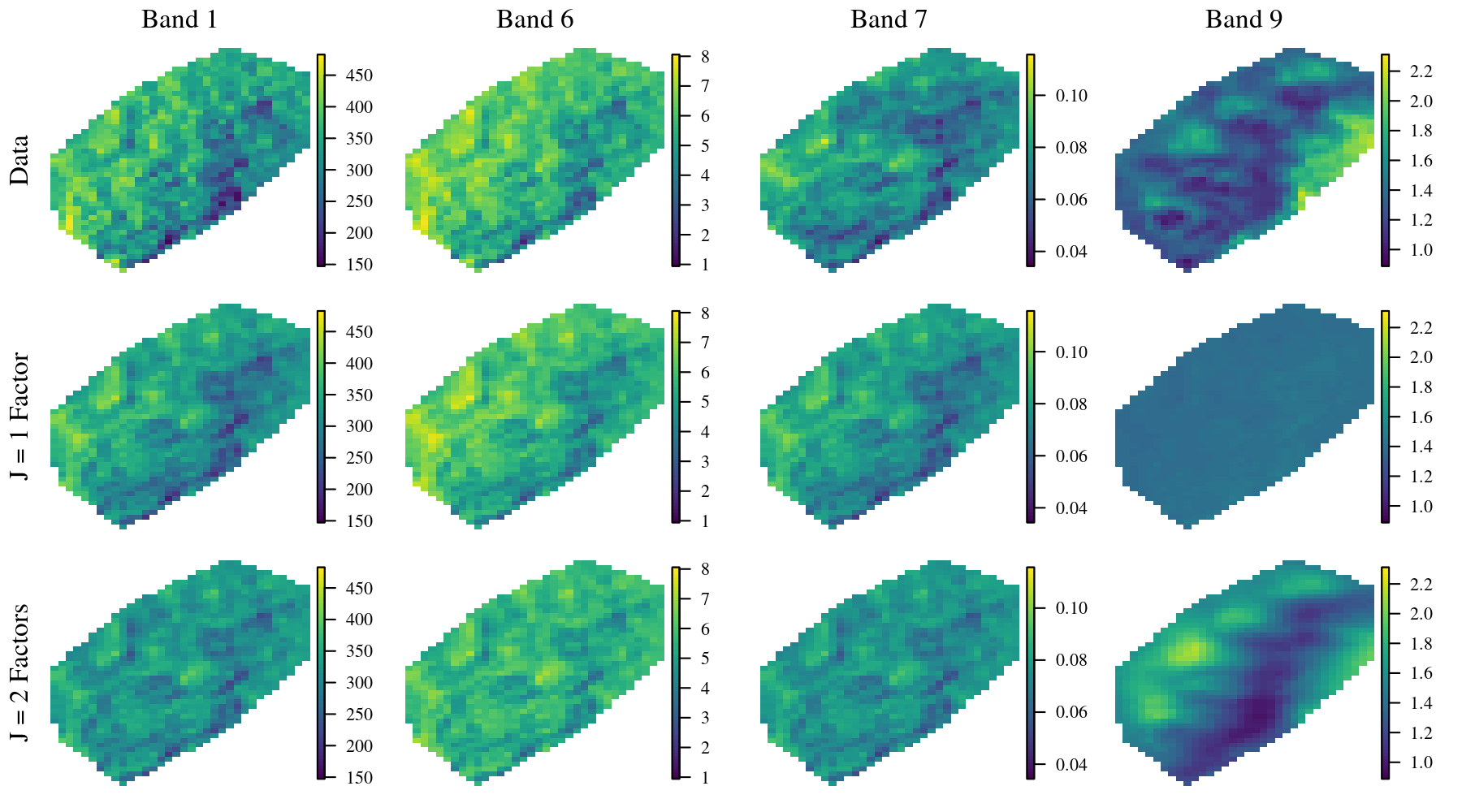}
\caption{\label{storm_factors} Data from time $t=30$ (top row), one factor representation (middle row), and two factor representation (bottom row). Factor representations use the expected values of $W_j$ given the data.}
\end{figure}

\section{Discussion}

We have introduced simple, flexible, and computationally efficient methods for estimating stationary multivariate spatial-temporal models from incomplete gridded data. The methods rely on successive imputation of data onto an expanded lattice under a model that is periodic on the expanded lattice. The simulation studies demonstrate that the periodic domain expansion is crucial for addressing edge effects; when no expansion is performed, the estimates of the spectrum are poor. The new estimates are competitive with maximum likelihood--though much faster--when there are $p=2$ multivariate components, and the estimates are faster and more accurate than maximum likelihood with $p > 2$ components. We have argued that this arises both from the computational demand of evaluating the likelihood function, and of maximizing over the large number of parameters in the multivariate Mat\'ern model.

The paper describes a method for decomposing the estimated spectrum into a linear model of coregionalization plus a residual multivariate process. The decomposition is meant as an exploratory tool for understanding the variation in the data. Finding an optimal such decomposition proved to be an interesting topic, and we have provided some theoretical results that make the numerical search over possible decompositions feasible. The factor decomposition was applied to two storm datasets, where we found that two-component decompositions explained roughly half of the variation in the data.

Only one- and two-factor decompositions were pursued here. Higher-order decompositions are in principle computationally feasible using Theorem 1 and some tedious algebra, but we leave this problem for future work. We have also not explored the predictive capabilities of the fitted models. Prediction is an interesting and important topic in multivariate spatial-temporal models, see for example \cite{zhang2015doesn}, but we have decided to focus instead on exploring the model fits and how they differ for two types of storms.

\begin{center}
\Large{\bf Acknowledgements}
\end{center}

This work was supported by the National Science Foundation under grant No.\ 1613219 and the National Institutes of Health under grant No.\ R01ES027892.

\bibliography{refs}{}
\bibliographystyle{apalike}

\appendix

\section{Supplementary Tables and Figures}

Table \ref{bandwidth_simstudy} contains simulation results for the periodic imputation methods for all four bandwidths. Figure \ref{coh_Im} contains estimated imaginary parts of the coherences for both storm datasets. Figure \ref{storm_factors_florence} shows Hurricane Florence data and estimated one- and two-factor representations of the data.

\begin{table}
\footnotesize
\centering
\begin{tabular}{cc|cccc|cccc|cccc}
& $\tau$ & \multicolumn{4}{c}{1.00} & \multicolumn{4}{c}{1.25} & \multicolumn{4}{c}{1.50} \\
$p$ & band. & 0.15 & 0.20 & 0.25 & 0.30 & 0.15 & 0.20 & 0.25 & 0.30 & 0.15 & 0.20 & 0.25 & 0.30 \\
\hline
& 0.75 &   2.726 &    2.619 &    2.567 &    2.540 &   0.373 &    0.316 &    0.303 &    0.304 &   0.330 &    0.294 &    0.293 &    0.295  \\
2 & 0.50 &   1.809 &    1.717 &    1.660 &    1.630 &   0.343 &    0.286 &    0.276 &    0.276 &   0.305 &    0.276 &    0.270 &    0.266  \\
& 0.25 &   1.362 &    1.303 &    1.248 &    1.206 &   0.305 &    0.262 &    0.253 &    0.253 &   0.270 &    0.239 &    0.241 &    0.243  \\
& minutes & 0.085 &  0.075 &  0.080 &  0.078 & 0.289 &  0.263 &  0.237 &  0.227 & 0.354 &  0.316 &  0.296 &  0.261 \\
\hline
& 0.75 &   2.776 &    2.630 &    2.544 &    2.494 &   0.479 &    0.410 &    0.391 &    0.379 &   0.449 &    0.383 &    0.363 &    0.358  \\
3 & 0.50 &   1.992 &    1.856 &    1.798 &    1.767 &   0.449 &    0.376 &    0.349 &    0.337 &   0.410 &    0.346 &    0.329 &    0.328  \\
& 0.25 &   1.547 &    1.428 &    1.353 &    1.300 &   0.423 &    0.338 &    0.315 &    0.306 &   0.388 &    0.323 &    0.312 &    0.302  \\
& minutes & 0.091 &  0.082 &  0.082 &  0.081 & 0.384 &  0.326 &  0.282 &  0.273 & 0.413 &  0.354 &  0.327 &  0.296  \\
\hline
& 0.75 &   3.105 &    2.898 &    2.754 &    2.654 &   0.569 &    0.479 &    0.448 &    0.429 &   0.509 &    0.439 &    0.414 &    0.407  \\
4 & 0.50 &   2.238 &    2.092 &    1.964 &    1.888 &   0.532 &    0.441 &    0.399 &    0.398 &   0.475 &    0.405 &    0.390 &    0.383  \\
& 0.25 &   1.806 &    1.604 &    1.480 &    1.412 &   0.501 &    0.405 &    0.375 &    0.372 &   0.447 &    0.381 &    0.361 &    0.358  \\
& minutes & 0.102 &  0.089 &  0.092 &  0.090 & 0.487 &  0.424 &  0.380 &  0.363 & 0.593 &  0.489 &  0.445 &  0.416  \\
\hline
\end{tabular}
\caption{\label{bandwidth_simstudy} Simulation results for all four smoothing kernel bandwidths for periodic imputation methods.}
\end{table}

\begin{figure}
\centering
\includegraphics[width=0.29\textwidth]{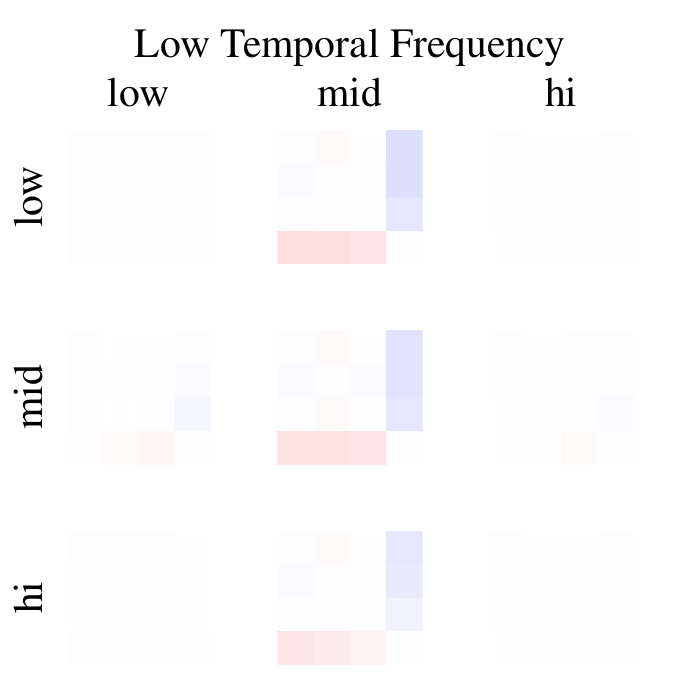}
\includegraphics[width=0.29\textwidth]{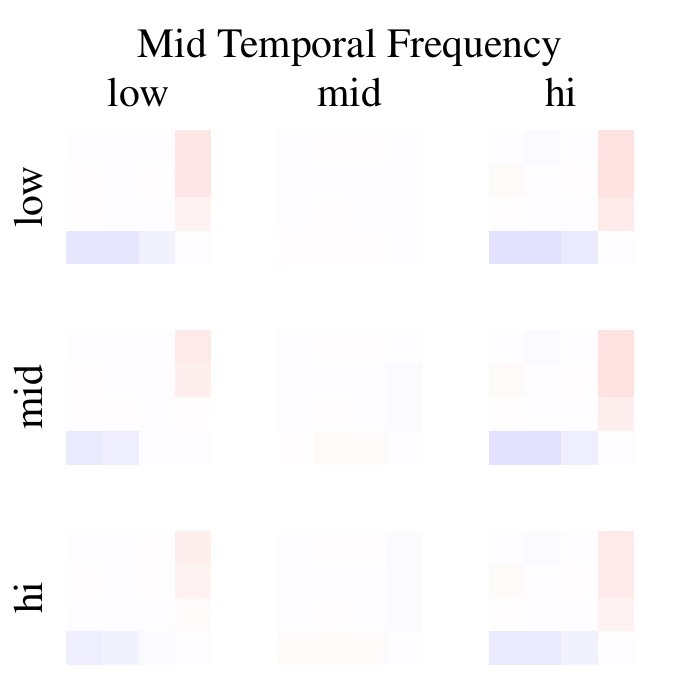}
\includegraphics[width=0.29\textwidth]{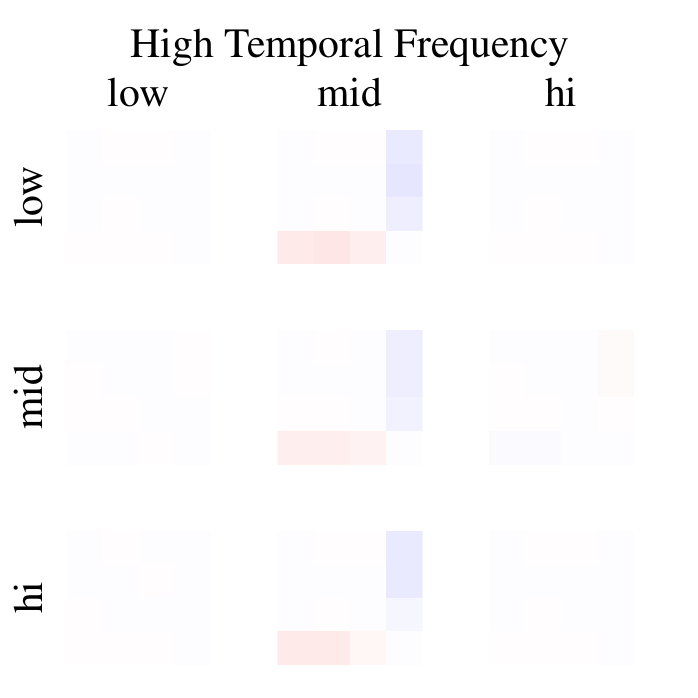}
\includegraphics[width=0.065\textwidth]{coh_legend.pdf}
\includegraphics[width=0.29\textwidth]{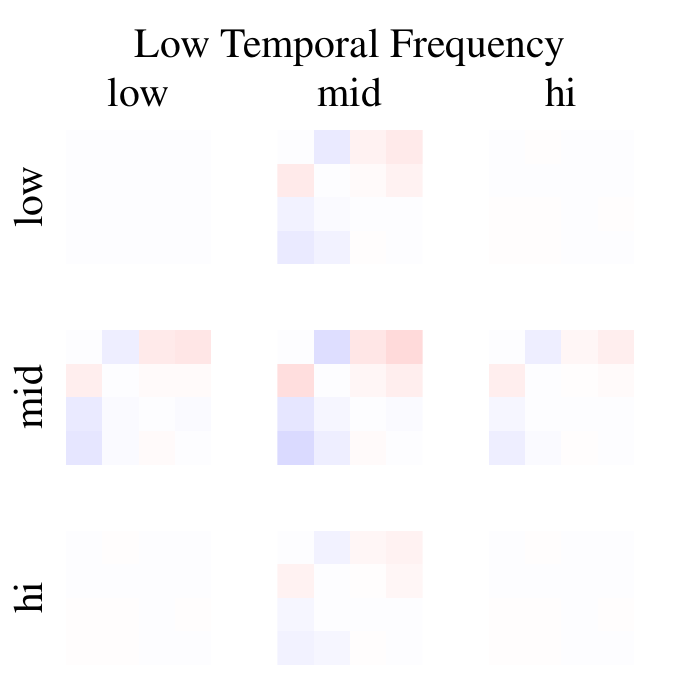}
\includegraphics[width=0.29\textwidth]{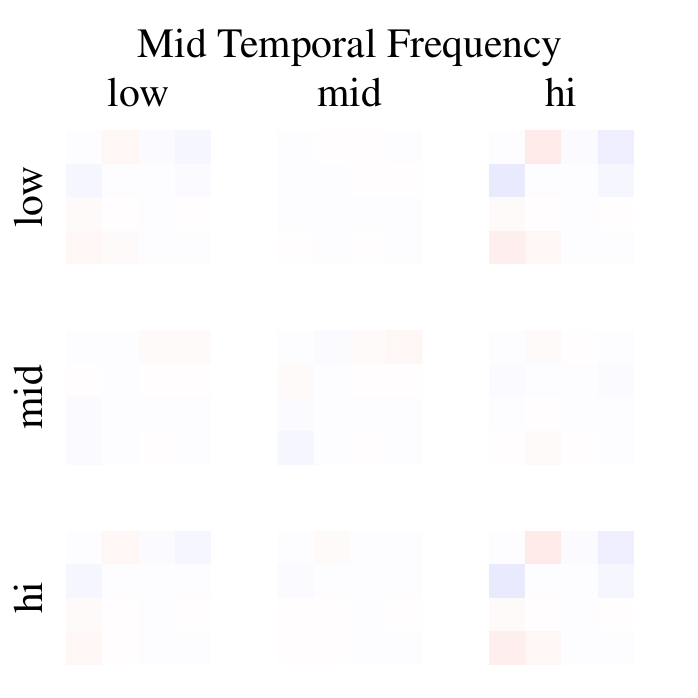}
\includegraphics[width=0.29\textwidth]{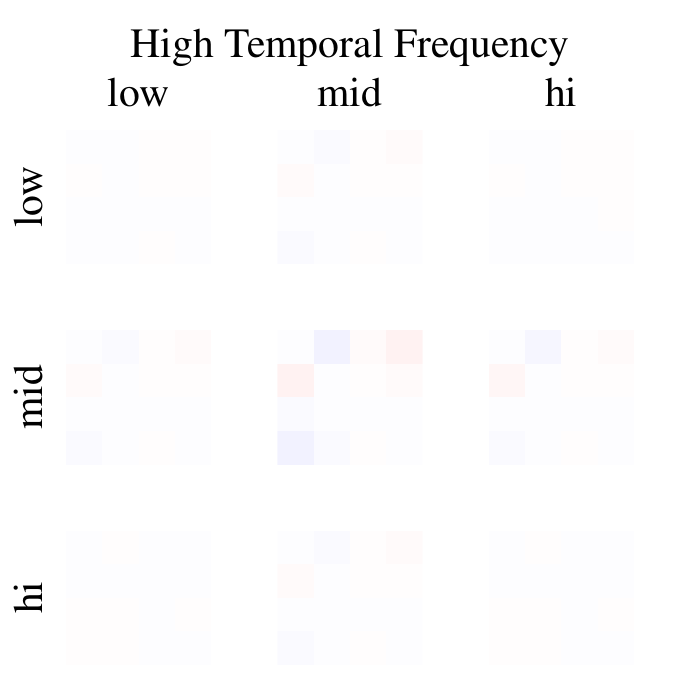}
\includegraphics[width=0.065\textwidth]{coh_legend.pdf}
\caption{\label{coh_Im} Estimated imaginary part of coherences for ordinary storm (top row) and Hurricane Florence (bottom row) at various frequencies $\omega = (\omega_1,\omega_2,\omega_3)$. ``Low'' refers to $\omega_j = 0$, ``mid'' to $\omega_j = 1/4$, and ``high'' to $\omega_j = 1/2$. In each $3 \times 3$ group of coherence matrices, rows are corresond to latitudinal frequencies, columns to longitudinal frequencies. }
\end{figure}

\begin{figure}
\centering
\includegraphics[width=\textwidth]{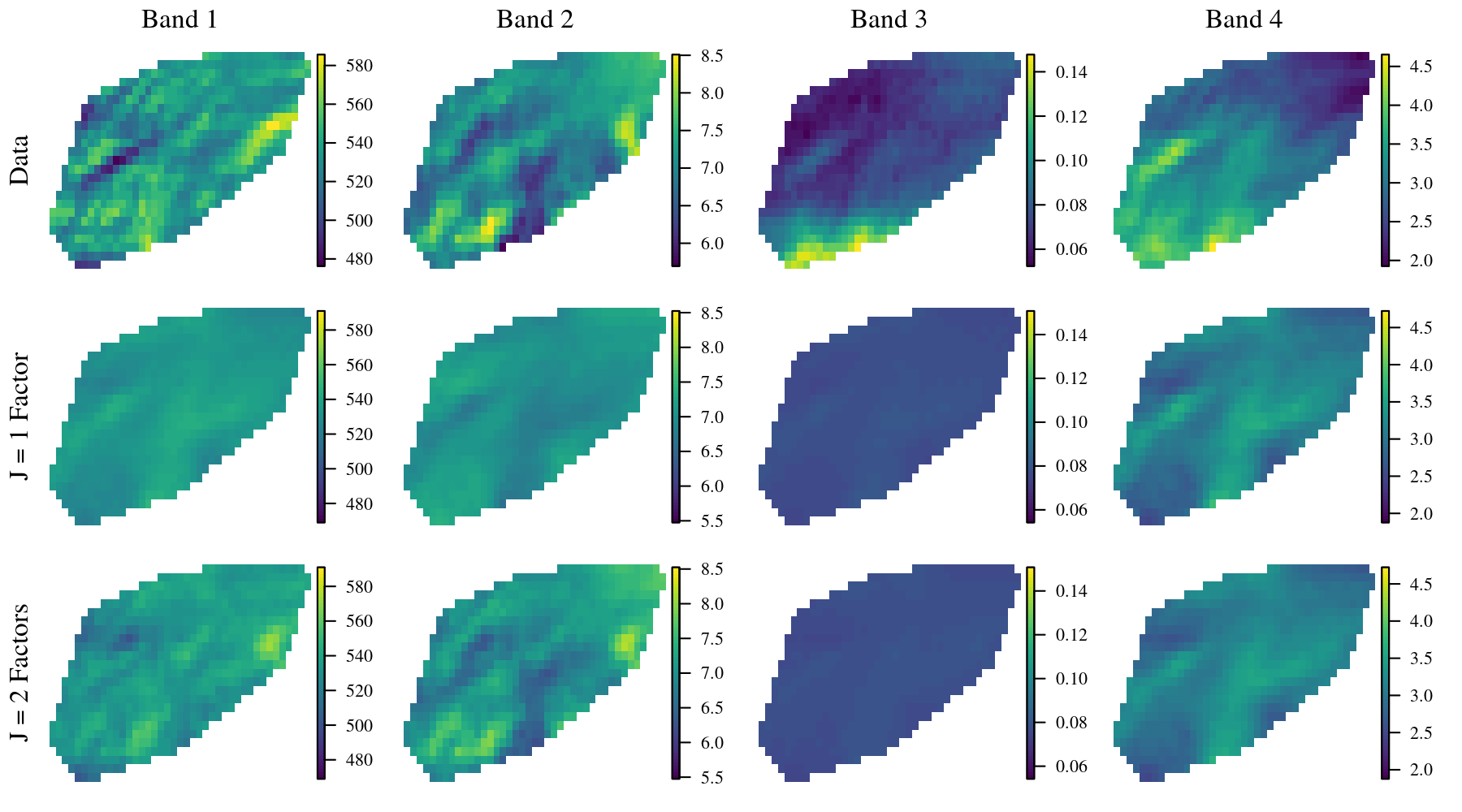}
\caption{\label{storm_factors_florence} Hurricane Florence data from time $t=30$ (top row), one factor representation (middle row), and two factor representation (bottom row). Factor representations use the expected values of $W_j$ given the data.}
\end{figure}

\end{document}